\newtheorem{prop}{Proposition}
\newtheorem{lem}{Lemma}
\newtheorem{coro}{Corollary}
\newtheorem{rem}{Remark}
\begin{document}

\title{Rate Optimization for Downlink URLLC via Pinching Antenna Arrays}

\author{Tong Lin, Jianyue Zhu, Wei Huang, Meng Hua, Zhizhong Zhang
\thanks{
	Tong Lin, Jianyue Zhu, and Zhizhong Zhang are
	with the College of Electronic and Information Engineering, Nanjing
	University of Information Science and Technology, Nanjing 210044, China
	(e-mail: 202412490664@nuist.edu.cn; zhujy@nuist.edu.cn; zhangzz@nuist.edu.cn).
	
	Wei Huang is with the School of Computer Science and Information
	Engineering, Hefei University of Technology, Hefei 230009, China (e-mail:
	huangwei@hfut.edu.cn).
	
	M. Hua is with the Department of Electrical and Electronic Engineering, Imperial College London, London SW7 2AZ, UK (e-mail: m.hua@imperial.ac.uk).}
}



\maketitle

\begin{abstract}
This work studies an ultra-reliable and low-latency communications (uRLLC) downlink system using pinching antennas which are realized by activating small dielectric particles along a dielectric waveguide. Our goal is to maximize the data rate by optimizing the positions of the pinching antennas. By proposing a compact and cost-efficient antenna architecture and formulating a finite blocklength-based optimization model, we derive a closed-form solution for the optimal antenna placement under quality-of-service (QoS) and antenna spacing constraints.
Meanwhile, a phase-alignment strategy is integrated into the design, enabling coherent signal superposition across the array. 
Simulation results confirm significant rate improvements over conventional antenna systems while satisfying uRLLC requirements, making the proposed design well-suited for compact and latency-critical future applications. 
\end{abstract}

\begin{IEEEkeywords}
Pinching antenna, ultra-reliable and low-latency communications (uRLLC), antenna position optimization.
\end{IEEEkeywords}

\section{Introduction}
\IEEEPARstart{I}{n} recent years, the demand for compact, energy-efficient, and high-performance wireless systems has driven the development of mechanically flexible and electrically reconfigurable antennas. Flexible and soft-body antennas, made from stretchable or fabric-based materials, have shown a promising solution to meet the demands of next-generation wireless platforms such as reconfigurable intelligent surfaces (RISs), fluid-antenna systems and movable antennas \cite{huang2019reconfigurable,wong2020fluid,zhu2023modeling}. These antennas not only enable the user's wireless channel to be reconfigured as a system parameter, but also offer adaptability and ease of integration into non-rigid environments. However, traditional flexible antennas suffer from large-scale path loss, hindering their use in high-performance settings \cite{ding2025flexible}.

To address these issues, pinching antennas have been proposed as a low-efficient waveguide-fed solution that eliminates the need for dedicated radio frequency (RF) chains for each antenna. This approach introduces a dielectric waveguide, allowing each antenna to passively extract energy from the guided signal \cite{ ouyang2025array}.
Unlike conventional fixed-position antennas, pinching antennas support flexible deployment without requiring extra hardware, offering a cost-effective approach for establishing adaptable line-of-sight (LoS) links \cite{ding2025flexible}.
This architecture further simplifies RF design, supports compact multi-antenna integration with phase-coherent transmission, and is also low-cost and easy to deploy since antenna configuration only requires inserting or removing dielectric materials. 

Recent studies have advanced both the theoretical foundations and practical implementations of pinching antenna systems.
Initial research has primarily focused on electromagnetic modeling and array design. Yang et al. \cite{yang2025pinching} presented a comprehensive survey on pinching antennas, discussing their electromagnetic behavior, integration with dielectric waveguides, and potential deployment in future 6G networks.
Ouyang et al. \cite{ouyang2025array} derived closed-form expressions for the achievable array gain  for pinching antennas, offering design insights on optimal element spacing in linear pinching arrays.

Based on these foundations, researchers have explored integrating pinching antennas with emerging wireless technologies. Flexible architectures employing pinching antennas in NOMA systems have been shown to achieve near-theoretical capacity under interference-limited conditions \cite{ding2025flexible}. Zhu et al. \cite{zhu2023modeling} explored the collaborative deployment of pinching antennas and movable antennas. A comparative study has been conducted between the pinching antenna system and RIS operating in the millimeter-wave band \cite{samy2025pinching}. Meanwhile, leveraging the low-cost and reconfigurable characteristics of pinching antennas, it is possible to improve positioning accuracy while achieving flexible user-centric positioning to support integrated sensing and communications (ISAC) \cite{ding2025pinching}.

In addition, ultra-reliable low-latency communication (uRLLC) is a critical application scenario in 5G and the upcoming 6G, targeting 99.999\% reliability and sub-millisecond latency \cite{li20185g, mahmood2020predictive}. Applications such as factory automation and autonomous driving, impose stringent quality-of-service (QoS) requirements that fundamentally deviate from classical Shannon-theoretic assumptions of infinite blocklength and negligible decoding error probability \cite{polyanskiy2010channel}. Despite the critical importance of uRLLC, the potential of pinching antenna architectures remains largely unexplored.


In this work, we study a single-user downlink uRLLC system employing waveguide-fed pinching antennas. A compact and cost-efficient antenna architecture is proposed and the rate-maximization problem is formulated under a minimum spacing constraint and a QoS constraint. We further derive a closed-form solution to optimal antenna placement with phase correction to enable coherent combining. Simulation results confirm data rate improvements while satisfying uRLLC requirements.


\section{System Model and Problem Formulation}
\subsection{System Model}

In this work, 
we consider a downlink communication system with a single waveguide equipped with $N$ pinching antennas, which serves a single-antenna mobile user. The user is assumed to be randomly located within a square region of side length $D$, with the position represented as $u=\left(x,y,0\right)$. The waveguide is aligned along the $x$-axis, with the antenna height set to $d$. It is assumed that $N$ pinching antennas are uniformly deployed along the waveguide at positions $\tilde{p}_{n}=\left(\tilde{x}_{n},0,d\right),\forall n\in\mathcal{N}\triangleq\left\{ 1,\cdots,N\right\}$, with adjacent antennas spaced by more than a minimum guard distance $\Delta$ to avoid electromagnetic coupling \cite{zhang2022beam}. Based on the spherical-wave propagation model, the channel vector between the set of antennas and the user is written as
\begin{equation}
    \begin{aligned}
\mathbf{h}_{\textrm{Pin}}=\left[a_{1},\cdots,a_{n},\cdots,a_{N}\right]^{T},
    \end{aligned}
\end{equation}
where $a_{n}=\frac{\alpha e^{-j\frac{2\pi}{\lambda}\left|u-\tilde{p}_{n}\right|-j\frac{2\pi}{\lambda_{g}}\left|\tilde{p}_{0}-\tilde{p}_{n}\right|}}{\left|u-\tilde{p}_{n}\right|},\forall n\in\left\{ 1,\cdots,N\right\}$.
 Here, $\alpha=\frac{c}{4\pi f_{c}}$ is a propagation constant, with $c$ being the speed of light, and $f_{c}$ being the carrier frequency. The waveguide wavelength is given by $\lambda_{g}=\frac{\lambda}{n_{\textrm{eff}}}$ with $\lambda $ being the free-space wavelength, and $n_{\textrm{eff}}$ being the effective refractive index of the waveguide. Each element in the channel vector incorporates two phase-shift terms, which are the free-space propagation delay $\varphi_{n}=\frac{2\pi}{\lambda}\left|u-\tilde{p}_{n}\right|$ and the waveguide-induced phase shift $\theta_{n}=\frac{2\pi}{\lambda_{g}}\left|\tilde{p}_{0}-\tilde{p}_{n}\right|$ with $\tilde{p}_{0}=\left(\tilde{x}_{0},0,d\right)$ denoting the location of the waveguide’s feed point.

Given the successive configuration of the antennas along the waveguide, we have the ordering $\tilde{x}_{n}>\tilde{x}_{n-1},\forall n\in\left\{ 2,\cdots,N\right\}$. Since the signal transmitted from any antenna is merely a phase-shifted version of that from another antenna, the system can only support a single data stream \cite{pozar2021microwave}. For simplicity, we assume that the total transmit power $P_{t}$ is evenly distributed across the $N$ pinching antennas \cite{ding2025flexible}. Hence, the received signal at the user is written as
\begin{align}
    y	 =\left(\sum_{n=1}^{N}\frac{\alpha e^{-j\varphi_{n}}e^{-j\theta_{n}}}{\left|u-\tilde{p}_{n}\right|}\right)\sqrt{\frac{P_{t}}{N}}s+\omega,
\end{align}
where $s$ denotes the transmitted signal along the waveguide, and $\omega\sim\mathcal{CN}\left(0,\sigma^{2}\right)$ represents additive complex Gaussian noise with variance $\sigma^{2}$. Then, the signal-to-noise ratio (SNR) is given by
\begin{align}
    \gamma	&=\frac{P_{t}}{N\sigma^{2}}\left|\sum_{n=1}^{N}\frac{\alpha e^{-j\varphi_{n}}e^{-j\theta_{n}}}{\left|u-\tilde{p_{n}}\right|}\right|^{2}=\frac{P_{t}\alpha^{2}}{N\sigma^{2}}\left|\sum_{n=1}^{N}\frac{e^{-j\left(\varphi_{n}+\theta_{n}\right)}}{\left|u-\tilde{p}_{n}\right|}\right|^{2}.
\end{align}

This work investigates finite blocklength communication.
Unlike Shannon's model, the strict latency constraints of uRLLC enforce short blocklength lengths $l$, rendering traditional capacity formulas inapplicable due to the fundamental trade-offs.
To address this limitation, a non-asymptotic framework for finite blocklength communication was proposed in \cite{polyanskiy2010channel}, which provides an accurate approximation of the achievable rate. Motivated by this, the maximum coding rate that can be achieved at a given $\gamma $, the decoding error probability $\varepsilon $, and the finite blocklength $ \mathit{l}$ is described by 
\begin{align}\label{rate}
    R\left(\gamma\right)=\ln\left(1+\gamma\right)-\tau\sqrt{1-\frac{1}{\left(1+\gamma\right)^{2}}},
\end{align}
where $\tau\triangleq\frac{Q^{-1}\left(\varepsilon\right)}{\sqrt{l}}$. $Q^{-1}\left(\varepsilon\right)$ denotes the inverse of the Gaussian $Q$-function, defined as $Q\left(x\right)=\frac{1}{\sqrt{2\pi}}\int_{x}^{\infty}e^{-\frac{t^{2}}{2}}dt$. This equation provides a more accurate estimate of the achievable rate in uRLLC with finite blocklength.

\subsection{Problem Formulation}
In this paper, we focus on a single-user uRLLC system with pinching antennas under the constraint of a minimum rate requirement, and the corresponding optimization problem is formulated as
\begin{align}
    \underset{\tilde{x}_{1},\cdots,\tilde{x}_{N}}{\textrm{max}}\quad	&R\left(\gamma\right),\label{P1}\\
\textrm{s.t.}	\quad&C_{1}:R\left(\gamma\right)\geq\frac{B}{l}\ln\left(2\right),\tag{5a}\\
	&C_{2}:\tilde{x}_{n}-\tilde{x}_{n-1}\geq\Delta,\forall n\in\left\{ 2,\cdots,N\right\}.\tag{5b}
\end{align}
Constraint $C_{1}$ guarantees that the user’s data rate satisfies the QoS requirement, with a target packet error probability of $10^{-5}$ for 32 bytes ($B=256$ bits), while $C_{2}$ enforces a minimum antenna spacing to prevent coupling between adjacent pinching antennas.
Meanwhile, problem \eqref{P1} is challenging due to two main factors. Firstly, the rate function under finite blocklength transmission is more complex than the Shannon capacity, complicating both the objective and QoS constraint. Secondly, the channel characteristics are affected by pinching antenna positions, further complicating the problem.

\section{The Position Design of Pinching Antennas}
In this section, we consider the position design of pinching antennas under finite blocklength transmission by solving problem \eqref{P1}. Specifically, we first equivalently transform the QoS constraint and the objective function by analyzing the data rate function under the finite blocklength transmission. Subsequently, based on the transformed optimization problem, we further optimize the positions of pinching antennas to derive a closed-form position solution by ignoring phase constraint, and then the phase alignment issue is further addressed by applying a fine-tuning algorithm to adjust antenna positions.

\subsection{Equivalent Transformation of Problem \eqref{P1}}
In this subsection, we first deal with the QoS constraint and the objective function in problem \eqref{P1}. In the following proposition, we present an equivalent transformation form of the QoS constraint.
\begin{prop}\label{prop1}
For problem \eqref{P1}, the constraint $R\left(\gamma\right)\geq\frac{B}{l}\ln\left(2\right)$ is equivalent to
\begin{align}\label{qos}
    \gamma\geq\nu_{2},
\end{align}
where $\nu_{2}=e^{\frac{\mathcal{W}\left(^{2\tau,-2\tau};-4\tau^{2}*2^{-\frac{2B}{l}}\right)}{2}+\frac{B}{l}\ln\left(2\right)}-1$. Here $\mathcal{W}$ is the generalized Lambert W function given by
\begin{align}
    \mathcal{W}\left(^{\iota_{1},\iota_{2}};\mu\right)	&=\iota_{1}-\sum_{m=1}^{\infty}\frac{1}{m\ast m!}\left(\frac{\mu me^{-\iota_{1}}}{\iota_{2}-\iota_{1}}\right)^{m}\\
	&\times\mathcal{B}_{m-1}\left(\frac{-2}{m\left(\iota_{2}-\iota_{1}\right)}\right)
\end{align}
 with $ \mathcal{B}_{m}\left(z\right)=\sum_{k=0}^{m}\frac{\left(m+k\right)!}{k!\left(m-k\right)!}\left(\frac{z}{2}\right)^{k}$.
\end{prop}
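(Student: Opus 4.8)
The plan is to exploit the monotonicity of $R(\gamma)$ to collapse the QoS inequality into a single scalar equation, and then to solve that equation in closed form through a change of variables that matches the defining relation of the generalized Lambert $W$ function.

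First I would analyze the shape of $R(\gamma)$. Writing $t = 1+\gamma$ and differentiating $R = \ln t - \tau\sqrt{1-t^{-2}}$ gives $\frac{dR}{dt} = \frac{1}{t} - \frac{\tau}{t^{2}\sqrt{t^{2}-1}}$, which is positive exactly when $t^{2}(t^{2}-1) > \tau^{2}$. Since $t^{2}(t^{2}-1)$ is strictly increasing for $t\geq 1$, there is a single threshold beyond which $R$ is strictly increasing, while below it $R$ decreases from $R(0)=0$ to a negative minimum. Because the QoS target $\frac{B}{l}\ln 2$ is strictly positive, it can only be met on the increasing branch, along which $R$ rises monotonically to $+\infty$. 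Hence the feasible set $\{\gamma:R(\gamma)\geq\frac{B}{l}\ln 2\}$ is precisely the half-line $[\nu_{2},\infty)$, with $\nu_{2}$ the unique root of $R(\gamma)=\frac{B}{l}\ln 2$. This establishes the equivalence asserted in \eqref{qos} and reduces the problem to solving that equation.

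Next I would solve $\ln t - \tau\sqrt{1-t^{-2}} = \frac{B}{l}\ln 2$. Isolating the radical and squaring (valid on the increasing branch, where $\ln t \geq \frac{B}{l}\ln 2$ makes the isolated term nonnegative) yields $(\ln t - \frac{B}{l}\ln 2)^{2} = \tau^{2}(1-t^{-2})$. The decisive algebraic step is to move the $\tau^{2}$ term and read the left-hand side as a difference of squares: setting $z = 2(\ln t - \frac{B}{l}\ln 2)$, so that $t^{-2}=e^{-z}\,2^{-\frac{2B}{l}}$, the equation becomes $(z-2\tau)(z+2\tau) = -4\tau^{2}\,2^{-\frac{2B}{l}}\,e^{-z}$. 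This is exactly the defining equation of the generalized Lambert $W$ function with the two prescribed parameters, so $z = \mathcal{W}(^{2\tau,-2\tau};-4\tau^{2}\,2^{-\frac{2B}{l}})$. Unwinding $z\mapsto t\mapsto\gamma$ through $\gamma = t-1 = e^{z/2 + \frac{B}{l}\ln 2}-1$ then reproduces the stated $\nu_{2}$, and the explicit series representation of $\mathcal{W}$ follows directly from its definition.

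The hard part will be the branch selection forced by the squaring step. The squared equation also admits the spurious family $\ln t - \frac{B}{l}\ln 2 = -\tau\sqrt{1-t^{-2}}$, and correspondingly the generalized Lambert $W$ relation possesses several roots; I must confirm that the branch generating $\nu_{2}$ is the one satisfying $\ln t \geq \frac{B}{l}\ln 2$, i.e.\ the physical increasing branch identified in the monotonicity step. Pinning down this correspondence, rather than carrying out the mechanical manipulations, is where the argument demands the most care.
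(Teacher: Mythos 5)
Your proposal is correct and shares the paper's two-stage skeleton---establish that $R(\gamma)$ is unimodal (decreasing, then increasing without bound) so that the QoS set is a half-line, then express the unique root of $R(\gamma)=\frac{B}{l}\ln 2$ through the generalized Lambert $W$ function---but several of your supporting arguments differ from the paper's, and are in fact tighter. First, you read the sign of the derivative from the equivalence $\frac{dR}{dt}>0 \iff t^{2}(t^{2}-1)>\tau^{2}$, with $t^{2}(t^{2}-1)$ strictly increasing; the paper instead asserts that $R'(\gamma)$ itself is monotonically increasing, which is false for large $\gamma$ (there $R'\approx\frac{1}{1+\gamma}$ decays to $0^{+}$), so your formulation is the one that actually justifies the sign pattern around $\nu_{0}$. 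Second, to collapse the constraint to $\gamma\ge\nu_{2}$ you observe that on the decreasing branch $R\le R(0)=0<\frac{B}{l}\ln 2$, so the target is attainable only on the increasing branch; the paper takes a longer route, introducing the auxiliary root $\nu_{1}$ of $R=0$ and proving $\nu_{2}>\nu_{1}>\nu_{0}$ via monotonicity of the generalized Lambert $W$, a property your argument never needs. Third, you display the algebra (isolate the radical, square, substitute $z=2\bigl(\ln t-\frac{B}{l}\ln 2\bigr)$) that turns the rate equation into $(z-2\tau)(z+2\tau)=-4\tau^{2}2^{-\frac{2B}{l}}e^{-z}$, which the paper states only implicitly through its final formula. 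The one item you flag but leave open---branch selection after squaring---is genuine but resolves exactly as you anticipate: the true root has $z=2\tau\sqrt{1-t^{-2}}\in(0,2\tau)$ while the spurious root is negative, and the series defining $\mathcal{W}\left(^{2\tau,-2\tau};\mu\right)$ (whose first-order term is $\iota_{1}+\frac{\mu e^{-\iota_{1}}}{\iota_{1}-\iota_{2}}$, i.e., precisely the convention $e^{x}(x-\iota_{1})(x-\iota_{2})=\mu$ that your equation matches) is the branch near $\iota_{1}=2\tau$, which is the nonnegative one; the paper passes over this point in silence, so completing it would make your proof strictly more careful than the published one.
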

\begin{proof}
 Please refer to Appendix \ref{app1}.
\end{proof}
\begin{coro}\label{cor1}
    $R\left(\gamma\right)$ is monotonically increasing within $\left(\nu_{2},+\infty\right)$.
\end{coro}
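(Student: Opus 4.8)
The plan is to establish the corollary through a first-derivative test on $R(\gamma)$, using the location of $\nu_{2}$ already pinned down by Proposition \ref{prop1}. Setting $t\triangleq 1+\gamma$, I would first differentiate \eqref{rate}; a routine computation yields
\begin{equation}
R'(\gamma)=\frac{t\sqrt{t^{2}-1}-\tau}{t^{2}\sqrt{t^{2}-1}},
\end{equation}
so that on the admissible range $t>1$ the sign of $R'(\gamma)$ coincides exactly with the sign of $t\sqrt{t^{2}-1}-\tau$. Since the target error probability is $10^{-5}<\tfrac{1}{2}$, we have $\tau=Q^{-1}(\varepsilon)/\sqrt{l}>0$.

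Next I would exploit that $g(t)\triangleq t\sqrt{t^{2}-1}$ is continuous and strictly increasing on $(1,+\infty)$, mapping this interval onto $(0,+\infty)$. Hence there is a unique $t^{\star}>1$ with $g(t^{\star})=\tau$, and $R'(\gamma)<0$ for $1<t<t^{\star}$ while $R'(\gamma)>0$ for $t>t^{\star}$. Equivalently, writing $\gamma^{\star}\triangleq t^{\star}-1$, the rate $R$ is strictly decreasing on $(0,\gamma^{\star})$ and strictly increasing on $(\gamma^{\star},+\infty)$, with a single interior minimum at $\gamma^{\star}$.

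The final step is to locate $\nu_{2}$ relative to $\gamma^{\star}$. From \eqref{rate} one reads off $R(0)=0$, while the QoS level $\tfrac{B}{l}\ln 2$ is strictly positive. On the decreasing branch we therefore have $R\le R(0)=0<\tfrac{B}{l}\ln 2$, so no feasible point lies there; on the increasing branch $R$ climbs from its negative minimum to $+\infty$ and meets the level $\tfrac{B}{l}\ln 2$ exactly once. By Proposition \ref{prop1} this crossing is precisely $\gamma=\nu_{2}$, whence $\nu_{2}>\gamma^{\star}$ and $(\nu_{2},+\infty)\subset(\gamma^{\star},+\infty)$, on which $R$ is strictly increasing.

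I expect the last step to be the crux. Computing $R'$ is mechanical, but the monotonicity of $R$ is only local: the rate dips below zero before recovering, so $R$ is not increasing on all of $(0,+\infty)$. The delicate point is to certify that $\nu_{2}$ falls beyond the minimizer $\gamma^{\star}$, which is secured by combining $R(0)=0$ with the strict positivity of the QoS target $\tfrac{B}{l}\ln 2$; together these force the feasible set $\{\gamma:R(\gamma)\ge \tfrac{B}{l}\ln 2\}$ to lie entirely on the increasing branch.
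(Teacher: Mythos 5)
Your proof is correct, and it shares the paper's overall skeleton: differentiate $R$, show it is unimodal (decreasing then increasing) with a single interior minimizer, and then certify that $\nu_{2}$ lies on the increasing branch. Where you genuinely diverge is in the last step. The paper introduces an intermediate point $\nu_{1}$ (the positive zero of $R$), argues $\nu_{1}>\nu_{0}$ from $R(\nu_{0})<0=R(\nu_{1})$, and then claims $\nu_{2}>\nu_{1}$ by invoking monotonicity of the generalized Lambert $\mathcal{W}$ function together with $2^{-2B/l}<1$. You bypass $\nu_{1}$ and the Lambert $\mathcal{W}$ machinery entirely: since $R\le R(0)=0$ on the whole decreasing branch while the QoS level $\tfrac{B}{l}\ln 2$ is strictly positive, the feasible set $\{\gamma: R(\gamma)\ge \tfrac{B}{l}\ln 2\}$ — which Proposition \ref{prop1} identifies as $[\nu_{2},+\infty)$ — cannot intersect the decreasing branch, forcing $\nu_{2}>\gamma^{\star}$ directly. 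This buys two things. First, it is more elementary: it needs only $R(0)=0$, unimodality, and positivity of the target, and does not rest on monotonicity properties of the generalized Lambert $\mathcal{W}$ function, which are far less standard than for the classical $W$ and are asserted without proof in the paper. Second, your sign analysis of $R'$ is tighter than the paper's: the paper claims $R'(\gamma)$ itself is monotonically increasing, which is false globally (as $\gamma\to\infty$, $R'\to 0^{+}$ from above after peaking); what is true, and what your factorization $R'(\gamma)=\bigl(t\sqrt{t^{2}-1}-\tau\bigr)/\bigl(t^{2}\sqrt{t^{2}-1}\bigr)$ makes transparent, is that the sign of $R'$ is governed by the strictly increasing function $t\sqrt{t^{2}-1}$, which is all the argument actually requires.
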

As derived in Proposition \ref{prop1} and Corollary \ref{cor1}, the rate $R\left(\gamma\right)$ is monotonically increasing with respect to the SNR $\gamma$ under the QoS constraint. Leveraging this property, the objective function can be equivalently transformed into the form stated in Proposition \ref{prop2}.
\begin{prop}\label{prop2}
For problem \eqref{P1}, the objective function can be transformed into
\begin{align}
    \underset{\tilde{x}_{1},\cdots,\tilde{x}_{N}}{\textrm{max}}\quad
    \left|\sum_{n=1}^{N}\frac{e^{-j\psi_{n}}}{\sqrt{\left(\tilde{x}_{n}-x\right)^{2}+y^{2}+d^{2}}}\right|,
\end{align}
where $\psi_{n}\triangleq\varphi_{n}+\theta_{n}=2\pi\left(\frac{\left|u-\tilde{p}_{n}\right|}{\lambda}+\frac{\left|\tilde{p}_{0}-\tilde{p}_{n}\right|}{\lambda_{g}}\right)$.
\end{prop}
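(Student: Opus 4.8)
The plan is to leverage the monotonicity of $R(\gamma)$ established in Corollary~\ref{cor1} to convert the rate-maximization objective into an equivalent, position-only expression. First I would observe that on the feasible set of problem~\eqref{P1} the QoS constraint $C_{1}$ forces $\gamma\geq\nu_{2}$, which by Corollary~\ref{cor1} places every feasible point inside the interval $\left(\nu_{2},+\infty\right)$ on which $R$ is strictly increasing in $\gamma$. Consequently, maximizing $R\left(\gamma\right)$ over the antenna positions is equivalent to maximizing the SNR $\gamma$ itself, since a strictly increasing map preserves the location of the maximizer.

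Next I would peel off the factors in $\gamma$ that do not depend on the positions $\tilde{x}_{1},\cdots,\tilde{x}_{N}$. Recalling that
\begin{align}
\gamma=\frac{P_{t}\alpha^{2}}{N\sigma^{2}}\left|\sum_{n=1}^{N}\frac{e^{-j\left(\varphi_{n}+\theta_{n}\right)}}{\left|u-\tilde{p}_{n}\right|}\right|^{2},
\end{align}
the prefactor $\frac{P_{t}\alpha^{2}}{N\sigma^{2}}$ is a strictly positive constant, so it can be dropped without affecting the optimizer. Because the square-root is monotonically increasing on the nonnegative reals, maximizing $\gamma$ is in turn equivalent to maximizing the modulus $\bigl|\sum_{n=1}^{N}e^{-j\left(\varphi_{n}+\theta_{n}\right)}/\left|u-\tilde{p}_{n}\right|\bigr|$.

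Finally I would make the dependence on positions explicit through the geometry. With $u=\left(x,y,0\right)$ and $\tilde{p}_{n}=\left(\tilde{x}_{n},0,d\right)$, the Euclidean distance is $\left|u-\tilde{p}_{n}\right|=\sqrt{\left(\tilde{x}_{n}-x\right)^{2}+y^{2}+d^{2}}$; substituting this and writing $\psi_{n}\triangleq\varphi_{n}+\theta_{n}$ yields precisely the claimed objective. The only point requiring care is the first step: the equivalence between maximizing $R$ and maximizing $\gamma$ holds \emph{only because} the QoS constraint confines the feasible region to the monotone branch $\left(\nu_{2},+\infty\right)$ guaranteed by Corollary~\ref{cor1}; outside this branch $R$ need not be increasing, so the reduction could in principle fail. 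Given Corollary~\ref{cor1}, however, this step is immediate, and I expect no further obstacle in the remaining algebraic substitutions.
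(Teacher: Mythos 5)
Your proposal is correct and follows essentially the same route as the paper's own proof: both invoke the monotonicity of $R\left(\gamma\right)$ on $\left(\nu_{2},+\infty\right)$ (guaranteed on the feasible set by the QoS constraint, via Proposition~\ref{prop1} and Corollary~\ref{cor1}) to reduce rate maximization to SNR maximization, then strip the positive constant $\frac{P_{t}\alpha^{2}}{N\sigma^{2}}$ and the squaring, and finally substitute $\left|u-\tilde{p}_{n}\right|=\sqrt{\left(\tilde{x}_{n}-x\right)^{2}+y^{2}+d^{2}}$. Your write-up is merely more explicit about the intermediate monotone transformations that the paper compresses into a single equivalence.
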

\begin{proof}
 Please refer to Appendix \ref{app2}.
\end{proof}
In order to maximize the objective function, we assume that the phase term $e^{-j\psi_{n}}$ is aligned modulo $2\pi$. Based on Propositions \ref{prop1} and \ref{prop2}, the optimization problem can be transformed into
\begin{align}\label{P2}
    \underset{\tilde{x}_{1},\cdots,\tilde{x}_{N}}{\textrm{max}}	\quad&\sum_{n=1}^{N}f\left(\tilde{x}_{n}\right),\\
\textrm{s.t.}\quad	&C_{1}':\frac{P_{t}\alpha^{2}}{N\sigma^{2}}\left|\sum_{n=1}^{N}f\left(\tilde{x}_{n}\right)\right|^{2}\geq\nu_{2},	C_{2}, C_{3},\tag{10a} 
\end{align}
where 
$f\left(\tilde{x}_{n}\right)\triangleq\frac{1}{\left|u-\tilde{p}_{n}\right|}=\frac{1}{\sqrt{\left(\tilde{x}_{n}-x\right)^{2}+C}}$ 
with $C=y^{2}+d^{2}>0$ and 
   $ \psi_{n}=2\pi\left(\frac{\sqrt{\left(\tilde{x}_{n}-x\right)^{2}+C}}{\lambda}+\frac{{\left|\tilde{x}_{n}-\tilde{x}_{0}\right|}}{\lambda_{g}}\right)$.

Despite simplifying the complex rate function through this transformation, the position optimization problem remains challenging due to the non-convexity of both the objective function and the QoS constraint.
In the next subsection, we proceed to solve this problem in detail.

\subsection{Positioning Optimization for Pinching Antennas}
The main challenge of this problem lies in the joint optimization of phase shifts and antenna positions. To address this, we adopt a step-by-step optimization strategy. Firstly, we ignore phase constraint to optimize positions only, then adjust phase shifts based on the optimized position results.

After ignoring the phase constraint, problem \eqref{P2} is transformed into
\begin{align}\label{P3}
    \underset{\tilde{x}_{1},\cdots,\tilde{x}_{N}}{\textrm{max}}	\quad&\sum_{n=1}^{N}f\left(\tilde{x}_{n}\right),\\
\textrm{s.t.}\quad &C_{1}', C_{2}.\tag{11a}
\end{align}
Based on the symmetry and monotonicity of $f\left(\tilde{x}_{n}\right)$, we derive the following closed-form expression for the optimal positions of pinching antennas.
\begin{prop}\label{prop3}
    For problem \eqref{P3}, the optimal solution is
    \begin{align}
        \tilde{x}_{n}^{*}=x+\left(n-\frac{N+1}{2}\right)\Delta.
    \end{align}    
\end{prop}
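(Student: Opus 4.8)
The plan is to first remove the QoS constraint and then identify the maximizer of the purely geometric objective via its first-order conditions. I would start by noting that $\sum_{n} f(\tilde{x}_n)$ is the same positive quantity whose square appears in $C_1'$: since $C_1'$ only imposes the \emph{lower} bound $\sum_n f(\tilde{x}_n) \ge \sqrt{N\sigma^2\nu_2/(P_t\alpha^2)}$ whereas the objective \emph{maximizes} that sum, any maximizer over $C_2$ that is feasible for \eqref{P3} must be optimal. Hence, assuming feasibility, $C_1'$ is inactive at the optimum and can be dropped, leaving $\max\sum_n f(\tilde{x}_n)$ subject to $C_2$ alone.

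I would then record the properties of $f(t)=((t-x)^2+C)^{-1/2}$ that drive the argument: it is even about $t=x$, strictly decreasing in $|t-x|$ (so $f'$ is odd, negative for $t>x$ and positive for $t<x$), and, differentiating twice, strictly concave on $|t-x|<\sqrt{C/2}$. Reparametrizing by the left-most position $\tilde{x}_1$ and the gaps $g_k\triangleq\tilde{x}_k-\tilde{x}_{k-1}\ge\Delta$ makes each $\tilde{x}_n=\tilde{x}_1+\sum_{k=2}^{n}g_k$ affine in the variables and turns $C_2$ into the convex set $\{g_k\ge\Delta\}$. On the operating regime in which the whole array lies inside the concave neighbourhood of $x$, the objective is concave over a convex set, so the KKT conditions are necessary and sufficient for global optimality.

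The core step is to verify that the candidate $\tilde{x}_n^{*}=x+(n-\tfrac{N+1}{2})\Delta$ meets these conditions. Stationarity in the free variable $\tilde{x}_1$ requires $\sum_{n=1}^{N}f'(\tilde{x}_n)=0$; since the offsets $(n-\tfrac{N+1}{2})\Delta$ are symmetric about $0$ and $f'$ is odd, the terms cancel in pairs and this holds exactly. For each gap, increasing $g_k$ slides all antennas $n\ge k$ to the right, so $\partial(\sum_n f)/\partial g_k=\sum_{n\ge k}f'(\tilde{x}_n)=-\sum_{n<k}f'(\tilde{x}_n)$ by the stationarity identity. The sequence $\{f'(\tilde{x}_n^{*})\}$ is odd-symmetric, sums to zero, and changes sign only once, so its prefix sums stay positive; hence each gap derivative is strictly negative, every spacing constraint is active with a nonnegative multiplier, and the symmetric minimum-spacing configuration is the unique KKT point, therefore the global maximizer.

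The main obstacle is exactly this concavity requirement: $f$ is concave only for $|t-x|<\sqrt{C/2}$ and convex in its tails, where a more spread-out placement could in principle beat the balanced one, so monotonicity and symmetry alone do not settle global optimality. I would therefore need to invoke the practical regime, guard spacing $\Delta$ on the order of a wavelength against $C=y^2+d^2$ on the order of metres squared, to guarantee $\tfrac{N-1}{2}\Delta<\sqrt{C/2}$, i.e. that the entire span stays in the concave region. Pinning down this condition (or importing it from the system parameters) is the step I expect to demand the most care.
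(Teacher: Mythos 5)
Your KKT verification at the candidate configuration is sound (stationarity in $\tilde{x}_1$ from the odd symmetry of $f'$, strictly negative gap derivatives from the prefix-sum identity, hence nonnegative multipliers), and your reason for discarding $C_1'$ is the same one the paper uses implicitly, with feasibility delegated to Lemma \ref{lem1}. The genuine gap is in the sufficiency half of your argument. The objective is concave only on the convex region where every antenna lies in $\left(x-\sqrt{C/2},\,x+\sqrt{C/2}\right)$, but the feasible set of \eqref{P3} is the unbounded set $\left\{ g_{k}\geq\Delta\right\}$, which is not contained in that region. Concavity-plus-KKT therefore certifies only that your candidate beats every feasible configuration whose antennas all stay inside the concave neighbourhood; it says nothing about configurations with one or more antennas outside it, where $f$ is convex, and you give no argument excluding those. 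This is not a technicality: when $\Delta$ is large relative to $\sqrt{C}$ such configurations genuinely win --- for $N=2$ and $\Delta>\sqrt{2C}$ the symmetric placement is a local \emph{minimum} along the common-shift direction, and putting one antenna directly at $x$ beats it --- so without a global step the proof cannot close. (This also shows your worry about the concave region is well founded: Proposition \ref{prop3} as stated silently relies on the practical regime, and even the paper's own assertions that $g''\left(\tilde{x}_{1}^{*}\right)<0$ and that the symmetric point is the ``unique global maximum'' need it.)

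The paper fills exactly the hole you left open with an exchange (contradiction) argument that uses no concavity at all: if any gap exceeds $\Delta$, then moving one antenna of that pair toward the other (which one depends on whether the pair's midpoint lies left or right of $x$) strictly decreases that antenna's distance to $x$, strictly increases the objective, and violates no other spacing constraint; hence every optimum has all gaps equal to $\Delta$. This collapses \eqref{P3} to the one-dimensional family $g\left(\tilde{x}_1\right)=\sum_{n=1}^{N} f\left(\tilde{x}_1+(n-1)\Delta\right)$, whose maximizer is then identified by symmetry and derivative information. To repair your proof, prepend such a global reduction: the exchange step, plus the observation that an optimal array must straddle $x$ (otherwise a rigid shift toward $x$ increases every term). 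The optimum is then confined to arrays of span $(N-1)\Delta$ containing $x$, and on that bounded convex set your concavity/KKT reasoning applies verbatim under your stated comparison of $(N-1)\Delta$ with $\sqrt{C/2}$, which indeed holds comfortably for the paper's parameters ($\Delta=\lambda/2$ at $28$ GHz versus $C\geq d^{2}=9\,\mathrm{m}^{2}$).
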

\begin{proof}
     Please refer to Appendix \ref{app3}.
\end{proof}

This proposition derives the optimal closed-form solution for antenna positions when ignoring phase constraint. To ensure the problem is solvable, we propose the following lemma.
\begin{lem}\label{lem1}
    The feasibility of problem \eqref{P3} is guaranteed by 
    \begin{align}\label{feasibility}
        C<\frac{NP_{t}\alpha^{2}}{\nu_{2}\sigma^{2}}.
    \end{align}
\end{lem}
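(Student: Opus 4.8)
The plan is to turn the existence question into a single scalar inequality by exploiting Proposition~\ref{prop3}. Problem~\eqref{P3} maximizes $\sum_{n=1}^{N}f(\tilde{x}_{n})$ subject only to the spacing constraint $C_{2}$, while the QoS constraint $C_{1}'$ merely requires this same positive, monotone quantity to exceed a fixed level. Hence the feasible set of~\eqref{P3} is nonempty if and only if the \emph{maximal} objective value---attained at the placement $\tilde{x}_{n}^{*}=x+\left(n-\frac{N+1}{2}\right)\Delta$ that Proposition~\ref{prop3} identifies as the spacing-constrained maximizer---already satisfies $C_{1}'$. Since every $f(\tilde{x}_{n})>0$, I would take the square root of $C_{1}'$ and reduce feasibility to the clean requirement $S^{*}\triangleq\sum_{n=1}^{N}f(\tilde{x}_{n}^{*})\geq\sqrt{N\sigma^{2}\nu_{2}/(P_{t}\alpha^{2})}$.

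The next step, which is the heart of any \emph{sufficiency} argument, is to lower bound $S^{*}$. Writing $\beta_{n}\triangleq\left(n-\frac{N+1}{2}\right)\Delta$, the optimal distances are $|u-\tilde{p}_{n}^{*}|=\sqrt{\beta_{n}^{2}+C}$, so $S^{*}=\sum_{n=1}^{N}(\beta_{n}^{2}+C)^{-1/2}$. Bounding every summand below by its smallest value, which occurs at the outermost antenna where $|\beta_{n}|$ peaks at $\beta_{\max}\triangleq\frac{(N-1)\Delta}{2}$, gives $S^{*}\geq N/\sqrt{\beta_{\max}^{2}+C}$ and therefore $\gamma^{*}\geq\frac{NP_{t}\alpha^{2}}{\sigma^{2}(C+\beta_{\max}^{2})}$ for the SNR at the optimum. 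Forcing this lower bound to clear $\nu_{2}$ yields the rigorous sufficient condition $C+\frac{(N-1)^{2}\Delta^{2}}{4}\leq\frac{NP_{t}\alpha^{2}}{\nu_{2}\sigma^{2}}$, which certifies that $\tilde{x}_{n}^{*}$ is feasible and hence that~\eqref{P3} is.

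The stated bound $C<\frac{NP_{t}\alpha^{2}}{\nu_{2}\sigma^{2}}$ is exactly this condition with the spacing term dropped, and I expect recovering it to be the main obstacle. The difficulty is structural: the spacing-free inequality coincides with the \emph{upper} bound $\gamma^{*}\leq\frac{NP_{t}\alpha^{2}}{C\sigma^{2}}$ obtained from $f(\tilde{x}_{n})\leq 1/\sqrt{C}$, which only certifies that the condition is \emph{necessary} for feasibility; a genuine sufficiency proof must instead lower bound $S^{*}$ and so unavoidably carries the $\frac{(N-1)^{2}\Delta^{2}}{4}$ correction. The clean stated form is thus exact in the single-antenna case $N=1$ (where $\beta_{\max}=0$) and is otherwise the leading-order statement, valid when $\frac{(N-1)^{2}\Delta^{2}}{4}$ is negligible against $\frac{NP_{t}\alpha^{2}}{\nu_{2}\sigma^{2}}$. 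I would close this gap either by invoking that small-spacing regime explicitly or by replacing the crude per-term bound with a sharper integral-comparison estimate of $S^{*}$, and I would flag this reconciliation as the delicate point of the argument rather than suppress it.
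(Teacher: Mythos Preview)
Your setup coincides with the paper's: both reduce feasibility of~\eqref{P3} to the single inequality
\[
S^{*}\;=\;\sum_{n=1}^{N}\frac{1}{\sqrt{(\tilde{x}_{n}^{*}-x)^{2}+C}}\;\ge\;\sqrt{\frac{\nu_{2}N\sigma^{2}}{P_{t}\alpha^{2}}}
\]
evaluated at the Proposition~\ref{prop3} placement. From here the two arguments diverge in exactly the way you anticipate. You lower-bound $S^{*}$ by $N/\sqrt{\beta_{\max}^{2}+C}$ and obtain the genuinely sufficient condition $C+\tfrac{(N-1)^{2}\Delta^{2}}{4}\le NP_{t}\alpha^{2}/(\nu_{2}\sigma^{2})$, then flag the missing spacing term as the delicate point. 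The paper instead takes the \emph{upper} bound $S^{*}<N/\sqrt{C}$, explicitly labels the resulting inequality a ``necessary condition'' for $S^{*}\ge K$, and then concludes that feasibility ``is guaranteed by'' $C<NP_{t}\alpha^{2}/(\nu_{2}\sigma^{2})$.

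In other words, the logical slip you were worried about---passing from the upper bound on $S^{*}$ to a sufficiency claim---is precisely what the paper does; it does not close the gap you identified. Your derivation with the $\tfrac{(N-1)^{2}\Delta^{2}}{4}$ correction is the rigorous sufficiency statement, and your reading of the clean bound as a necessary (or leading-order, $\Delta$-negligible) condition is the accurate one.
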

\begin{proof}
    Please refer to Appendix \ref{app4}.
\end{proof}

Subsequently, by incorporating the phase constraint $C_{3}$, only minor adjustments to the positions of pinching antennas are required. The handling phase constraint
\begin{align}
    \psi_{n}-\psi_{n-1}=2k\pi,\forall n\in\left\{ 2,\cdots,N\right\} 
\end{align}
can be converted into
\begin{align}
    \textrm{mod}\left\{  \Psi\left(\tilde{x}_{n}\right)-\Psi\left(\tilde{x}_{n-1}\right),2\pi\right\}=0,
\end{align}
where $\textrm{mod}\left\{ a,b\right\} $ denotes the modulo operation of $a$ with respect to $b$.
To address this constraint, a per-antenna fine-tuning algorithm is proposed to determine the antenna positions under the phase constraint. This procedure is summarized in Algorithm \ref{alg1}, which performs a forward sweep to adjust antennas to the right and a backward sweep to adjust those to the left. The algorithm ensures that the phase difference between adjacent antennas is an integer multiple of $2\pi$. At each step, a local modular equation is solved to iteratively refine the antenna positions while simultaneously satisfying both the QoS and the minimum spacing requirement.

\begin{algorithm}
\caption{Per-antenna Fine-tuning Algorithm}\label{alg1}
\begin{algorithmic}[1]
\REQUIRE Number of antenna $N$, minimum antenna spacing $\Delta$, initial location of center antenna $\tilde{x}_{\frac{N+1}{2}}=x$, phase function $\Psi(\tilde{x}_{n})$, $\Psi_{\text{prev}} = \Psi(\tilde{x}_{n})$, $\Psi_{\text{next}} = \Psi(\tilde{x}_{n+1})$
\ENSURE Optimal position of pinching antennas $ \tilde{x}^{*}_{1}, \cdots, \tilde{x}^{*}_{N}$

\STATE \textbf{Forward pass:} refine antennas $\frac{N+1}{2}+1,\cdots,N$
\FOR{$n = \frac{N+1}{2}+1$ \TO $N$}
    \STATE $a \gets \tilde{x}_{n-1} + \Delta$, $b \gets \tilde{x}_{n-1} + 3 \times \Delta$
    \STATE Solve $\textrm{mod}\left\{  \Psi\left(\tilde{x}_{n}\right)-\Psi_{\text{prev}},2\pi\right\}=0$  for $\tilde{x}_{n} \in [a, b]$ to get $\tilde{x}^{*}_{n}$
    \STATE $\Psi_{\text{prev}} \gets \Psi(\tilde{x}^{*}_{n})$
\ENDFOR

\STATE \textbf{Backward pass:} refine antennas $\frac{N+1}{2}-1,\cdots,1$
\FOR{$n = \frac{N+1}{2}-1$ to $1$}
\STATE $c \gets \tilde{x}_{n+1} - \Delta$, $d \gets \tilde{x}_{n+1} - 3 \times \Delta$
\STATE Solve $\textrm{mod}\left\{  \Psi_{\text{next}}-\Psi\left(\tilde{x}_n\right),2\pi\right\}=0$  for $\tilde{x}_{n} \in [d, c]$ to get $\tilde{x}^{*}_{n}$ 
\STATE$\Psi_{\text{next}} \gets \Psi(\tilde{x}^{*}_{n})$
\ENDFOR
\RETURN $ \tilde{x}^{*}_{1}, \cdots, \tilde{x}^{*}_{N}$
\end{algorithmic}
\end{algorithm}
\begin{rem}
    When $N$ is an even number, Algorithm \ref{alg1} remains valid simply by replacing $\frac{N+1}{2}$ in Algorithm \ref{alg1} with $\frac{N}{2}$.
\end{rem}

\section{Simulation Results}

In this section, we verify the effectiveness of the proposed scheme through numerical simulations.
All simulations are carried out using a free-space wavelength of $\lambda = \frac{c}{f_c}$, corresponding to a carrier frequency of $f_c = 28~\textrm{GHz}$. The antenna height is set to $d = 3~\textrm{m}$, the noise power is assumed to be $\sigma^2 = -90~\textrm{dBm}$, and the antenna spacing is chosen as $\Delta = \frac{\lambda}{2}$. The waveguide has an effective refractive index of $n_{\textrm{eff}} = 1.4$, and the feed point is located at the coordinate $(0, 0, d)$.

\begin{figure}[htbp]
\centering
\includegraphics[scale=0.45]{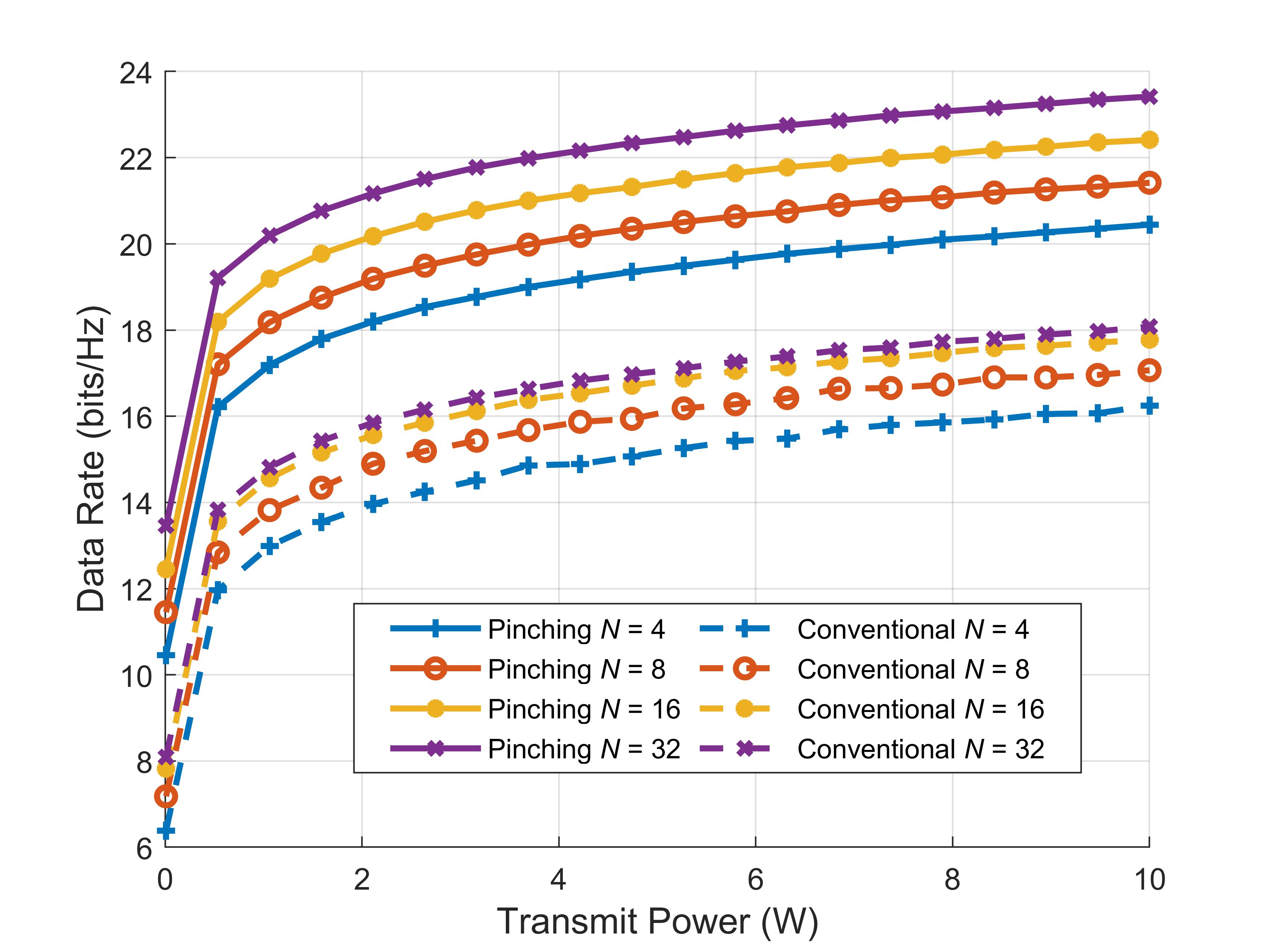}
\caption{Data rate versus transmit power for cases of system with pinching antennas and conventional antennas.}
\label{fig2}
\end{figure}

Fig. \ref{fig2} studies the achievable data rate versus the transmit power under different number of antennas $N$. Both the pinching antenna system and the conventional antenna system exhibit data rate saturation at high transmit power. 
This saturation occurs because, under finite blocklength constraint, increasing transmit power leads to a diminishing trend in rate improvement due to the non-negligible dispersion term in the rate expression. 
Compared with the system using conventional antennas, the system using pinching antennas achieves higher data rates. As expected, increasing the number of antennas improves the data rate of the system. Moreover, with a larger number of antennas, the advantage of pinching antenna becomes more pronounced over the conventional one.



\begin{figure}[htbp]
\centering
\includegraphics[scale=0.45]{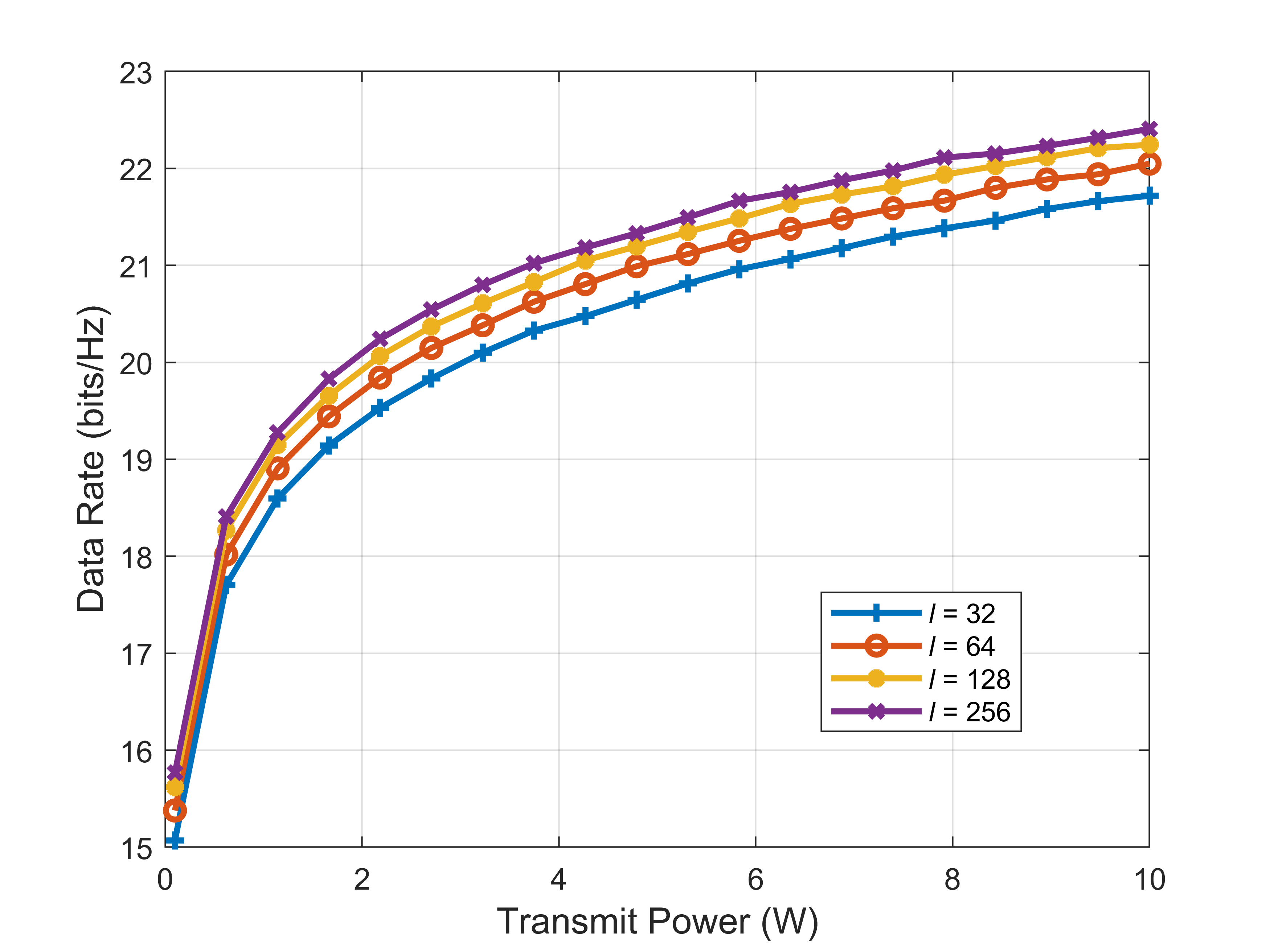}
\caption{Data rate versus transmit power for different QoS constraints.}
\label{fig3}
\end{figure}

Fig. \ref{fig3} shows the relationship between transmit power and the achievable data rate under various blocklengths $l$. As transmit power increases, the data rate improves across all blocklength settings, with longer blocklength yielding higher data rates. These results demonstrate that extending blocklength can effectively mitigate the rate loss caused by the strict reliability and latency constraints of uRLLC systems. 

\begin{figure}[htbp]
\centering
\includegraphics[scale=0.45]{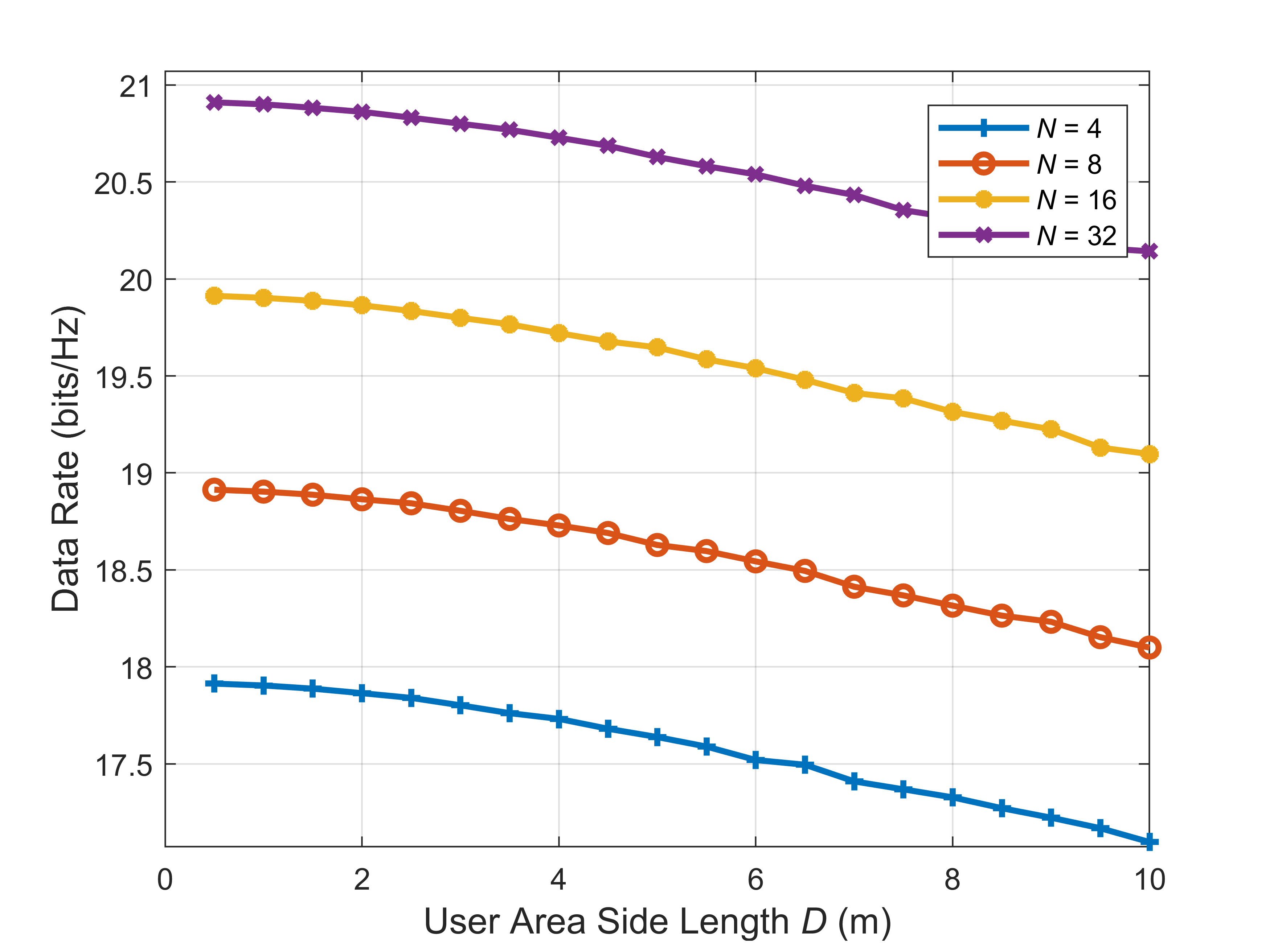}
\caption{Data rate versus user area side length.}
\label{fig4}
\end{figure}

Fig. \ref{fig4} illustrates how user area side length $D$ affects data rate for various antenna numbers $N$. As $D$ increases, one can observe that the system data rate decreases, with the decline becoming more pronounced for larger $N$. This trend matches theoretical expectations, i.e., a larger user area will cause the user to be farther from the feed point, leading to higher path loss, highlighting that system performance is sensitive to user distribution, particularly with fewer pinching antennas.

\section{Conclusion}
In this work, we proposed an antenna position design for pinching antennas in a single-user uRLLC system. By optimizing antenna positions and applying phase correction, we derived a closed-form algorithmic solution for the optimal placement of pinching antennas. Simulation results demonstrated that increasing the number of pinching antennas significantly improves the achievable data rate. Future work will focus on extending the proposed model to multi-user scenarios.

{\appendices
\section{Proof of the Proposition \ref{prop1}}\label{app1}
In order to handle the QoS constraint, we first analyze the function \eqref{rate}. Taking the first derivative of the function $R\left(\gamma\right)$ and setting it to zero, we have 
\begin{align}\label{rate'}
    R'\left(\gamma\right)=\frac{1}{1+\gamma}\left(1-\tau\frac{1}{\left(1+\gamma\right)\sqrt{\left(1+\gamma\right)^{2}-1}}\right)=0,
\end{align}
yielding the solution $\nu_{0}=\sqrt{\frac{1+\sqrt{1+4\tau^{2}}}{2}}-1$. 
Obviously, $R'\left(\gamma\right)$ is monotonically increasing with respect to $\gamma$ with $\tau=\frac{Q^{-1}\left(\varepsilon\right)}{\sqrt{l}}>0$.
As a result, $R'\left(\gamma\right)<R'\left(\nu_0\right)=0$ when $0<\gamma<\nu_{0}$ and $R'\left(\gamma\right)>R'\left(\nu_0\right)=0$ when $\gamma>\nu_{0}$, i.e., $R\left(\gamma\right)$ is monotonically decreasing in $0<\gamma<\nu_{0}$ and monotonically increasing in $\gamma>\nu_0$. Correspondingly, $R\left(\nu_{0}\right)<R\left(0\right)=0$.

Then, we further derive the solution to $R\left(\gamma\right)=\beta$ as
\begin{align}
    \gamma=e^{\frac{K}{2}+\beta}-1=e^{\frac{\mathcal{W}\left(^{2\tau,-2\tau};-4e^{-2\beta}\tau^{2}\right)}{2}+\beta}-1,
\end{align}
where $\mathcal{W}$ is given by
\begin{align}
    \mathcal{W}\left(^{\iota_{1},\iota_{2}};\mu\right)	&=\iota_{1}-\sum_{m=1}^{\infty}\frac{1}{m\ast m!}\left(\frac{\mu me^{-\iota_{1}}}{\iota_{2}-\iota_{1}}\right)^{m}\\
	&\times\mathcal{B}_{m-1}\left(\frac{-2}{m\left(\iota_{2}-\iota_{1}\right)}\right)
\end{align}
 with $\mathcal{B}_{m}\left(z\right)=\sum_{k=0}^{m}\frac{\left(m+k\right)!}{k!\left(m-k\right)!}\left(\frac{z}{2}\right)^{k}$.
For $\beta=0$ and $R\left(\gamma\right)=\frac{B}{l}\ln\left(2\right)$, the corresponding solutions are
\begin{align}
\nu_{1}=e^{\frac{\mathcal{W}\left(^{2\tau,-2\tau};-4\tau^{2}\right)}{2}}-1,
\end{align}
and
\begin{align}
\nu_{2}=e^{\frac{\mathcal{W}\left(^{2\tau,-2\tau};-4\tau^{2}*2^{-\frac{2B}{l}}\right)}{2}+\frac{B}{l}\ln\left(2\right)}-1,
\end{align}
respectively.

Since $R\left(\nu_{0}\right)<0$ and $R\left(\nu_{1}\right)=0$, we have $\nu_1>\nu_{0}$. Since the Lambert W function is monotonically increasing, and $2^{-\frac{2B}{l}}<1$ holds for all $B>0$ and $l>0$, it follows that $\nu_{2}>\nu_{1}>\nu_{0}$, ensuring $R\left(\gamma\right) $ is monotonically increasing in $\left(\nu_{2},+\infty\right)$. Hence, the QoS constraint can be simplified to \eqref{qos}.

\section{Proof of the Proposition \ref{prop2}}\label{app2}
The expression of SNR is $\gamma=\frac{P_{t}\alpha^{2}}{N\sigma^{2}}\left|\sum_{n=1}^{N}\frac{e^{-j\psi_{n}}}{\left|u-\tilde{p}_{n}\right|}\right|^{2}$, where $\psi_{n}\triangleq\varphi_{n}+\theta_{n}=2\pi\left(\frac{\left|u-\tilde{p}_{n}\right|}{\lambda}+\frac{\left|\tilde{p}_{0}-\tilde{p}_{n}\right|}{\lambda_{g}}\right)$. Since $R\left(\gamma\right)$ is monotonically increasing for $\gamma\geq\nu_{2}$,
\begin{align}
    \textrm{max}\quad R\left(\gamma\right)\iff\textrm{max}\quad\left|\sum_{n=1}^{N}\frac{e^{-j\psi_{n}}}{\left|u-\tilde{p}_{n}\right|}\right|.
\end{align}
As a result, we have Proposition \ref{prop2}.
\section{Proof of the Proposition \ref{prop3}}\label{app3}
Here, we use proof by contradiction to demonstrate Proposition \ref{app3}. We assume that \( \tilde{x}_n^* - \tilde{x}_{n-1}^* > \Delta \) for some \( n \). If \( \tilde{x}_n^* + \tilde{x}_{n-1}^* < 2x \), move the \((n{-}1)\)-th antenna toward the \(n\)-th by \( \delta_1 > 0 \) to satisfy \( \tilde{x}_n^* - (\tilde{x}_{n-1}^* + \delta_1) = \Delta \). In this case, $\sqrt{(\tilde{x}_n - x)^2 + C}$ is reduced, leading to an increase in the objective function. Similarly, if \( \tilde{x}_n^* + \tilde{x}_{n-1}^* \geq 2x \), shifting the \(n\)-th antenna left by \( \delta_2 \) achieves the same. In both cases, the modified spacing yields a better solution. Thus, \( \tilde{x}_n^* - \tilde{x}_{n-1}^* = \Delta \) must hold for all \( n \).

With uniform spacing, let \( \tilde{x}_n = \tilde{x}_1 + (n - 1)\Delta \). The objective becomes
\begin{align}
    \max_{\tilde{x}_1}\quad &\ g(\tilde{x}_1) = \sum_{n=1}^{N} \frac{1}{\sqrt{[\tilde{x}_1 + (n - 1)\Delta - x]^2 + C}},
\\
\text{s.t.}\quad &\gamma=\frac{P_{t}\alpha^{2}}{N\sigma^{2}}\left[g\left(\tilde{x}_{1}\right)\right]^{2}\geq\nu_{2}\iff g(\tilde{x}_1) \geq K,
\end{align}
where $K=\sqrt{\frac{\nu_{2}N\sigma^{2}}{P_{t}\alpha^{2}}}$ since $g\left(\tilde{x}_1\right)>0$. It is known that $g\left(\tilde{x}_1\right)$ is strictly symmetric about $\tilde{x}_1^* = x - \frac{N - 1}{2} \Delta$, with $g'\left(\tilde{x}_1^*\right)=0$ and $g''\left(\tilde{x}_1^*\right)<0$. Combining symmetry and properties of the first and second derivatives, $\tilde{x}_1^*$ is the unique global maximum point of $g\left(\tilde{x}_1\right)$. Then the optimal solution to problem \eqref{P3} is
\begin{align}
    \tilde{x}_{n}^{*}=\tilde{x}_{1}^*+\left(n-1\right)\Delta=x+\left(n-\frac{N+1}{2}\right)\Delta.
\end{align}

\section{Proof of the Lemma \ref{lem1}}\label{app4}
To ensure $g\left(\tilde{x}_{1}^*\right)\geq K$, $\tilde{x}_{n}^{*}$ still needs to satisfy
\begin{align}\label{ineq}
    \sum_{n=1}^{N}\frac{1}{\sqrt{\left(\tilde{x}_{n}^{*}-x\right)^{2}+C}}\geq\sqrt{\frac{\nu_{2}N\sigma^{2}}{P_{t}\alpha^{2}}}.
\end{align}
The summation form poses computational challenges. To simplify the analysis, we ensure that the maximum term in the summation satisfies the condition, which is the necessary condition for the above inequality. By observation, we note that the farther $\tilde{x}_{n}^{*}$ is from $x$, the smaller $\frac{1}{\left(\tilde{x}_{n}^{*}-x\right)+C}$ becomes. Since the $\tilde{x}_{n}^{*}$ values are symmetrically distributed around $x$, $\frac{1}{\left(\tilde{x}_{n}^{*}-x\right)+C}$ attains its maximum when $\tilde{x}_{n}^{*}=x$, at which point
\begin{align}
    \sum_{n=1}^{N}\frac{1}{\sqrt{\left(\tilde{x}_{n}^{*}-x\right)^{2}+C}}<\frac{N}{\sqrt{C}}.
\end{align}
Thus, the feasibility of problem \eqref{P3} is guaranteed by \eqref{feasibility}.
}

 \bibliographystyle{IEEEtran}
\bibliography{reference.bib}
%


 





\end{document}